\begin{document}

\newtheorem{theorem}{Theorem}
\newtheorem{lemma}[theorem]{Lemma}
\newtheorem{assumption}{Assumption}


\def\A{ {\mathbb{A} }}
\def\N{ {\mathbb{N} }}
\def\R{ {\mathbb{R} }}

\def\TURN{ { \gamma } }
\def\E{{E}}
\def\P{{P}}
\def\T{{T_{\epsilon,k}}}
\def\T{{T^\epsilon_k}}
\def\U{U^\epsilon}
\def\W{W^\epsilon}
\def\p1{ {g} }
\def\lplus{l^+}
\def\lminus{l^-}
\def\norm{ \eta }

\def\SIZE{ {N_k(l) } }
\def\SIZEp{ {N_k(\lplus) } }
\def\SIZEm{ {N_k(\lminus) } }
\def\SIZEt{ {N_k(l^*) } }
\def\SIZEemax{ {N_k(l^{\epsilon, *, (k)}} }
\def\SIZEmax{ {N_k(l^{*, (k)}) } }

\def\Na{ {n_k(w,a)} }
\def\La{ {L_k} }
\def\Le{ {L_{\epsilon, k}} }
\def\Les{ {L_{\epsilon}} }

\def\lk{l^{(k)}}
\def\lWstar{l^{W}}
\def\lstar{l^*}
\def\lmax{l^{(k)}}
\def\lemax{l^{\epsilon, *, (k)}}

\sloppy

\title{Brute force searching, the typical set and Guesswork}

\author{
   \IEEEauthorblockN{Mark M. Christiansen and Ken R. Duffy}
   \IEEEauthorblockA{Hamilton Institute\\
     National University of Ireland, Maynooth\\
     Email: \{mark.christiansen, ken.duffy\}@nuim.ie} 
  \and
  \IEEEauthorblockN{Fl\'avio du Pin Calmon and Muriel M\'edard }
  \IEEEauthorblockA{Research Laboratory of Electronics\\
Massachusetts Institute of Technology\\
     Email: \{flavio, medard\}@mit.edu}\\
}


\maketitle

\begin{abstract}
Consider the situation where a word is chosen probabilistically
from a finite list. If an attacker knows the list and can inquire
about each word in turn, then selecting the word via the uniform
distribution maximizes the attacker's difficulty, its Guesswork,
in identifying the chosen word. It is tempting to use this property
in cryptanalysis of computationally secure ciphers by assuming coded
words are drawn from a source's typical set and so, for all intents
and purposes, uniformly distributed within it. By applying recent
results on Guesswork, for i.i.d. sources, it is this equipartition ansatz
that we investigate here. In particular, we demonstrate that the
expected Guesswork for a source conditioned to create words in the
typical set grows, with word length, at a lower exponential rate than
that of the uniform approximation, suggesting use of the
approximation is ill-advised.
\end{abstract}

\section{Introduction}

Consider the problem of identifying the value of a discrete random
variable by only asking questions of the sort: is its value X? That
this is a time-consuming task is a cornerstone of computationally
secure ciphers \cite{Menezes96}. It is tempting to appeal to the
Asymptotic Equipartition Property (AEP) \cite{Cover91}, and the
resulting assignment of code words only to elements of the typical
set of the source, to justify restriction to consideration of a
uniform source, e.g. \cite{Pliam00,draper2007,sutcu2008}. This assumed
uniformity has many desirable properties, including maximum
obfustication and difficulty for the inquisitor, e.g. \cite{Calmon12}.
In typical set coding it is necessary to generate codes for words
whose logarithmic probability is within a small distance of the
word length times the specific Shannon entropy.  As a result, while
all these words have near-equal likelihood, the distribution is not
precisely uniform.  It is the consequence of this lack of perfect
uniformity that we investigate here by proving that results on
Guesswork \cite{Arikan96,Malone042,Pfister04,Hanawal11, Christiansen13}
extend to this setting. We establish that for source words originally
constructed from an i.i.d. sequence of letters, as a function of
word length it is exponentially easier to guess a word conditioned
to be in the source's typical set in comparison to the corresponding
equipartition approximation. This raises questions about the wisdom
of appealing to the AEP to justify sole consideration of the uniform
distributions for cryptanalysis and provides alternate results in
their place.

\section{The typical set and Guesswork}

Let $\A=\{0,\ldots,m-1\}$ be a finite alphabet and consider a stochastic
sequence of words, $\{W_k\}$, where $W_k$ is a word of length $k$
taking values in $\A^k$. The process $\{W_k\}$ has specific
Shannon entropy
\begin{align*}
H_W &:= -\lim_{k\to\infty} \frac 1k \sum_{w\in\A^k} \P(W_k=w)\log\P(W_k=w),
\end{align*}
and we shall take all logs to base $e$. For $\epsilon>0$, the typical
set of words of length $k$ is
\begin{align*}
\T := \left\{w\in\A^k: e^{-k(H_W+\epsilon)}\leq P(W_k=w)
			\leq e^{-k(H_W-\epsilon)}\right\}.
\end{align*}
For most reasonable sources \cite{Cover91}, $P(W_k\in\T)>0$ for all
$k$ sufficiently large and typical set encoding results in a new
source of words of length $k$, $\W_k$, with statistics
\begin{align}
\label{eq:defW}
P(\W_k=w) = 
	\begin{cases}
	\displaystyle
	\frac{\P(W_k=w)}{P(W_k\in\T)}
		& \text {if } w\in\T,\\
	0 & \text{if } w\notin \T.
	\end{cases}
\end{align}
Appealing to the AEP, these distributions are often substituted for
their more readily manipulated uniformly random counterpart, $\U_k$,
\begin{align}
\label{eq:defU}
P(\U_k=w) := 
	\begin{cases}
	\displaystyle
	\frac{1}{|\T|}
		& \text {if } w\in\T,\\
	0 & \text{if } w\notin \T,
	\end{cases}
\end{align}
where $|\T|$ is the number of elements in $\T$. While the distribution
of $\W_k$ is near-uniform for large $k$, it is not perfectly uniform
unless the original $W_k$ was uniformly distributed on a subset of
$\A^k$. Is a word selected using the distribution of $\W_k$ 
easier to guess than if it was selected uniformly, $\U_k$?

Given knowledge of $\A^k$, the source statistics of words, say those
of $W_k$, and an oracle against which a word can be tested one at
a time, an attacker's optimal strategy is to generate a partial-order
of the words from most likely to least likely and guess them in
turn \cite{Masey94,Arikan96}. That is, the attacker generates a
function $G:\A^k\to\{1,\ldots,m^k\}$ such that $G(w')<G(w)$ if
$\P(W_k=w')>\P(W_k=w)$. The integer $G(w)$ is the number of guesses
until word $w$ is guessed, its Guesswork.

For fixed $k$ it is shown in \cite{Masey94} that the Shannon
entropy of the underlying distribution bears little relation to the
expected Guesswork, $\E(G(W_k))$, the average number of guesses
required to guess a word chosen with distribution $W_k$ using the
optimal strategy. In a series
of subsequent papers \cite{Arikan96,Malone042,Pfister04,Hanawal11},
under ever less restrictive stochastic assumptions from words made
up of i.i.d. letters to Markovian letters to sofic shifts, an asymptotic
relationship as word length grows between scaled moments of the
Guesswork and specific R\'enyi entropy was identified: 
\begin{align} 
\label{eq:guess} 
\lim_{k\to\infty}\frac 1k \log \E(G(W_k)^\alpha) 
	= \alpha R_W\left(\frac{1}{1+\alpha}\right),
\end{align} 
for $\alpha>-1$, where $R_W(\beta)$ is the specific R\'enyi entropy 
for the process $\{W_k\}$ with parameter $\beta>0$,
\begin{align*} 
R_W(\beta)
	:= \lim_{k\to\infty} \frac 1k 
         \frac{1}{1-\beta}\log\left(\sum_{w\in\A^k} P(W_k=w)^\beta\right).
\end{align*}

These results have recently \cite{Christiansen13} been built on 
to prove that $\{k^{-1}\log G(W_k)\}$ satisfies a Large Deviation
Principle (LDP), e.g \cite{Dembo98}. Define the scaled Cumulant
Generating Function (sCGF) of $\{k^{-1}\log G(W_k)\}$ by
\begin{align*}
\Lambda_W(\alpha):=
	\lim_{k \rightarrow \infty}\frac{1}{k}\log 
	\E\left(e^{\alpha\log G(W_k)}\right)
	\text{ for } \alpha\in\R
\end{align*}
and make the following two assumptions.
\begin{itemize}
\item \emph{Assumption 1:}
For $\alpha>-1$, the sCGF $\Lambda_W(\alpha)$ exists, is equal to
$\alpha R_W\left(1/(1+\alpha)\right)$
and has a continuous derivative in that range.
\item \emph{Assumption 2:}
The limit 
\begin{align}
\label{def:p1}
\p1_W:=\lim_{k\to\infty} \frac 1k \log P(G(W_k)=1) 
\end{align}
exists in $(-\infty,0]$.
\end{itemize}
Should assumptions 1 and 2 hold, Theorem 3 of \cite{Christiansen13} 
establishes that $\Lambda_W(\alpha)=\p1_W$ for all $\alpha\leq-1$
and that the sequence $\{k^{-1}\log G(W_k)\}$
satisfies a LDP with
a rate function given by the Legendre Fenchel transform of the
sCGF, $\Lambda_W^*(x) := \sup_{\alpha\in\R} \{x\alpha
-\Lambda_W(\alpha)\}$. Assumption 1 is motivated by equation
\eqref{eq:guess}, while the Assumption 2 is a regularity condition
on the probability of the most likely word. With
\begin{align}
\label{eq:turn}
\TURN_W
	&:=\lim_{\alpha \downarrow -1}\frac{d}{d\alpha}\Lambda_W(\alpha),
\end{align}
where the order of the size of the set of maximum probability words
of $W_k$ is $\exp(k\TURN_W)$ \cite{Christiansen13}, $\Lambda_W^*(x)$
can be identified as
\begin{align}
	=
	\begin{cases}
	-x-\p1_W & \text{ if } x\in[0,\TURN_W]\\
	\sup_{\alpha\in\R} \{x\alpha -\Lambda_W(\alpha)\} 
		& \text{ if } x\in(\TURN_W,\log(m)],\\
	+\infty & \text{ if } x\notin[0,\log(m)].
	\end{cases}
\label{eq:rf}
\end{align}
Corollary 5 of \cite{Christiansen13} uses this LDP to
prove a result suggested in \cite{Arikan98,Sundaresan07},
that
\begin{align}
\label{eq:shannon}
\lim_{k\to\infty} \frac 1k E(\log(G(W_k))) = H_W,
\end{align}
making clear that the specific Shannon entropy determines the expectation
of the logarithm of the number of guesses to guess the word $W_k$.
The growth rate of the expected Guesswork is a distinct quantity
whose scaling rules can be determined directly from the sCGF in
equation
\eqref{eq:guess},
\begin{align*}
\lim_{k\to\infty} \frac 1k \log E(G(W_k)) = \Lambda_W(1).
\end{align*}
From these expressions and Jensen's inequality, it is clear that
the growth rate of the expected Guesswork is less than
$H_W$. Finally, as a corollary to the LDP, \cite{Christiansen13}
provides the following approximation to the Guesswork distribution
for large $k$:
\begin{align}
\label{eq:wknapprox}
P(G(W_k)=n) \approx 
	\frac 1n \exp\left(-k\Lambda_W^*(k^{-1}\log n)\right)
\end{align}
for $n\in\{1,\ldots,m^k\}$. Thus to approximate the Guesswork
distribution, it is sufficient to know the specific R\'{e}nyi entropy
of the source and the decay-rate of the likelihood of
the sequence of most likely words.

Here we show that if $\{W_k\}$ is constructed from i.i.d.
letters, then both of the processes $\{\U_k\}$ and $\{\W_k\}$ also
satisfy Assumptions 1 and 2 so that, with the appropriate rate
functions, the approximation in equation \eqref{eq:wknapprox} can
be used with $\U_k$ or $\W_k$ in lieu of $W_k$. This enables us to
compare the Guesswork distribution for typical set encoded words
with their assumed uniform counterpart. Even in the simple binary
alphabet case we establish that, apart from edge cases, a word
chosen via $\W_k$ is exponential easier in $k$ to guess on
average than one chosen via $\U_k$.

\section{Statement of Main results}

Assume that the words $\{W_k\}$ are made of i.i.d. letters, defining
$p=(p_0,\ldots,p_{m-1})$ by $p_a=P(W_1=a)$. We shall employ the
following short-hand: $h(l):=-\sum_a l_a\log l_a$ for
$l=(l_0,\ldots,l_{m-1})\in[0,1]^m$, $l_a\geq0$, $\sum_a l_a=1$, so
that $H_W=h(p)$, and $D(l||p):=-\sum_a l_a\log(p_a/l_a)$. Furthermore,
define $\lminus\in[0,1]^m$ and $\lplus\in[0,1]^m$ 
\begin{align}
\label{eq:lminus}
\lminus &\in \arg\max_l\{h(l):h(l)+D(l||p)-\epsilon = h(p)\},\\
\label{eq:lplus}
\lplus &\in \arg\max_l\{h(l):h(l)+D(l||p)+\epsilon = h(p)\},
\end{align}
should they exist. For $\alpha>-1$, also define $\lWstar(\alpha)$ and
$\norm(\alpha)$ by
\begin{align}
\lWstar_a(\alpha) 
	&:=\frac{p_a^{(1/(1+\alpha))}}{\sum_{b \in \A}p_b^{(1/(1+\alpha))}} 
	\text{ for all } a \in \A \text{ and} \label{eq:lWstar} \\
\norm(\alpha)&:= 
	-\sum_a \lWstar_a\log p_a 
	= - \frac{\sum_{a \in \A}p_a^{1/(1+\alpha)}\log p_a}
        {\sum_{b \in \A}p_b^{1/(1+\alpha)}}.
\label{eq:norm}
\end{align}
Assume that $h(p)+\epsilon\leq\log(m)$. If this is not the case,
$\log(m)$ should be substituted in place of $h(\lminus)$ for the
$\{\U_k\}$ results. Proofs of the following are deferred to
the Appendix. 
\begin{lemma}
\label{lem:ass1}
Assumption 1 holds for $\{\U_k\}$ and $\{\W_k\}$ with
\begin{align*}
\Lambda_{\U}(\alpha)&:=
	\alpha h(\lminus)
\end{align*}
and
\begin{align*}
\Lambda_{\W}(\alpha)=\alpha h(\lstar(\alpha))-D(\lstar(\alpha)||p),
\end{align*}
where
\begin{align}
\lstar(\alpha) =
        \begin{cases}
        \lplus & \text { if } \norm(\alpha)\leq-h(p)-\epsilon,\\
        \lWstar(\alpha) & 
		\text { if } \norm(\alpha)\in(-h(p)-\epsilon,h(p)+\epsilon),\\
        \lminus & \text { if } \norm(\alpha)\geq-h(p)+\epsilon.
        \end{cases}
        \label{eq:lstar}
\end{align}
\end{lemma}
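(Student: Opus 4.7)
The plan is to verify each clause of Assumption 1—existence of the sCGF, its identification with $\alpha R(1/(1+\alpha))$, and continuity of its derivative—by direct computation via the method of types.

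For $\{\U_k\}$, first compute the exponential growth rate of $|\T|$. Writing $P(W_k=w)=\exp(-k(h(l(w))+D(l(w)||p)))$ where $l(w)$ is the empirical type of $w$, membership in $\T$ is characterised by $l(w)\in L_\epsilon:=\{l:h(l)+D(l||p)\in[h(p)-\epsilon,h(p)+\epsilon]\}$. Standard method-of-types counting (each type class has cardinality of exponential order $\exp(kh(l))$) combined with Laplace's principle gives $k^{-1}\log|\T|\to\sup\{h(l):l\in L_\epsilon\}$, and this supremum is attained on the outer surface $h(l)+D(l||p)=h(p)+\epsilon$—precisely the definition of $\lminus$. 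Since $G(\U_k)$ ranks uniformly on $\{1,\ldots,|\T|\}$, $E(G(\U_k)^\alpha)=|\T|^{-1}\sum_{n=1}^{|\T|}n^\alpha\sim|\T|^\alpha/(\alpha+1)$ for $\alpha>-1$, so $\Lambda_\U(\alpha)=\alpha h(\lminus)$, which is linear in $\alpha$ and hence trivially continuously differentiable.

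For $\{\W_k\}$, the strategy is to identify the sCGF through its R\'enyi-entropy counterpart. The optimal guessing order on $\T$ coincides with the decreasing-$P(W_k=\cdot)$ order restricted to $\T$, and $P(\W_k=w)=P(W_k=w)/P(W_k\in\T)$ with $P(W_k\in\T)\to1$; in particular its logarithm vanishes under the $k^{-1}\log$ scaling. Applying Arikan-style upper and matching lower bounds on the scaled moment then gives
\[
\Lambda_\W(\alpha)=(1+\alpha)\lim_{k\to\infty}\frac{1}{k}\log\sum_{w\in\T}P(W_k=w)^{1/(1+\alpha)}.
\]
Grouping the sum by empirical type and invoking Laplace again collapses this to $\Lambda_\W(\alpha)=\sup_{l\in L_\epsilon}\{\alpha h(l)-D(l||p)\}$. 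Lagrange multipliers solve the unconstrained problem with optimiser $\lWstar(\alpha)$ proportional to $p_a^{1/(1+\alpha)}$, and this is admissible precisely when $\norm(\alpha)$ lies in $[h(p)-\epsilon,h(p)+\epsilon]$. When it violates a bound, the maximum sits on the corresponding constraint surface $-\sum_a l_a\log p_a=c$, and because the objective then simplifies to $(1+\alpha)h(l)-c$, the constrained maximiser is the maximum-entropy type on that hyperplane—namely $\lplus$ or $\lminus$. Assembling the three cases yields the stated piecewise formula.

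For continuity of $\Lambda_\W'$, I would use the envelope identity $\Lambda_\W'(\alpha)=h(\lstar(\alpha))$. Inside the middle regime $\lWstar(\alpha)$ depends smoothly on $\alpha$; in the outer regimes $\lstar$ is constant, so its entropy is too. The crucial structural observation is that $\lWstar(\alpha)$ and the boundary optimisers $\lplus,\lminus$ share the same tilted form $l_a\propto p_a^\lambda$, and at each critical $\alpha_c$ where $\norm(\alpha_c)$ meets a boundary, $\lWstar(\alpha_c)$ lies on that constraint surface as its max-entropy point—hence equals $\lplus$ or $\lminus$. Consequently $h(\lstar(\alpha))$ matches across each transition, so $\Lambda_\W$ is continuously differentiable throughout $(-1,\infty)$. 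The main obstacle I anticipate is precisely this derivative-matching at the regime boundaries, but the shared tilted-distribution structure makes the transition automatic; a secondary care is justifying the Arikan sandwich for the conditioned process $\{\W_k\}$ rather than an i.i.d. one, which reduces to the exponentially fast convergence $P(W_k\notin\T)\to 0$.
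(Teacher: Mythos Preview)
Your proposal is correct and follows essentially the same method-of-types route as the paper: compute the growth rate of $|\T|$ to handle $\{\U_k\}$, reduce $\Lambda_{\W}(\alpha)$ to the concave programme $\sup_{l\in L_\epsilon}\{\alpha h(l)-D(l||p)\}$, and identify the optimiser via tilted distributions. The only minor differences are that you sandwich via Arikan's R\'enyi-sum inequalities where the paper uses the Malone--Sullivan type-maximum bounds (equivalent on the exponential scale, though both require separate care for $\alpha\in(-1,0]$, which the paper treats explicitly), and your envelope-theorem argument $\Lambda_{\W}'(\alpha)=h(\lstar(\alpha))$ with matching at the regime boundaries is a touch cleaner than the paper's direct differentiation of $\lstar(\alpha)$.
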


\begin{lemma}
\label{lem:ass2}
Assumption 2 holds for $\{\U_k\}$ and $\{\W_k\}$ with
\begin{align*}
\p1_{\U} &= - h(\lminus) \text{ and }\\
\p1_{\W} &= \min\left(-h(p)+\epsilon, \log\max_{a\in\A}p_a\right).
\end{align*}
\end{lemma}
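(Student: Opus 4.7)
The plan is to evaluate the limits defining $\p1_\U$ and $\p1_\W$ directly, by identifying the most likely word under each distribution and computing the exponential rate of its probability.

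For $\{\U_k\}$, uniformity on $\T$ gives $P(G(\U_k)=1)=|\T|^{-1}$ exactly, so the task reduces to computing the exponential growth rate of $|\T|$. Partition $\T$ into type classes: for any $w$ of empirical type $l$, $P(W_k=w)=\exp(-k(h(l)+D(l||p)))$, so $w\in\T$ is equivalent to $h(l)+D(l||p)\in[h(p)-\epsilon,h(p)+\epsilon]$. Combining the standard method-of-types estimate $|T_l|=\exp(kh(l)+O(\log k))$ with the polynomial count of types of length $k$ gives
\begin{align*}
\lim_{k\to\infty} \frac{1}{k}\log|\T|
 = \sup\{h(l):\, h(l)+D(l||p)\in[h(p)-\epsilon,h(p)+\epsilon]\}.
\end{align*}
To identify this supremum as $h(\lminus)$, observe via Lagrange multipliers that entropy is maximized under the single constraint $-\sum_a l_a\log p_a = c$ by the tilted distribution $l_a\propto p_a^{\lambda}$; on this one-parameter family $\lambda=1$ gives $l=p$ and $c=h(p)$, while decreasing $\lambda$ simultaneously increases $c$ and $h$. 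Hence the supremum over $c\in[h(p)-\epsilon,h(p)+\epsilon]$ is attained at the upper endpoint $c=h(p)+\epsilon$, which is precisely the defining constraint for $\lminus$, giving $\p1_\U=-h(\lminus)$.

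For $\{\W_k\}$, writing $P(G(\W_k)=1)=\max_{w\in\T} P(W_k=w)/P(W_k\in\T)$ and noting $k^{-1}\log P(W_k\in\T)\to 0$ by the weak AEP, only the numerator contributes to the exponential rate. Two independent upper bounds apply: the trivial $P(W_k=w)\leq(\max_{a\in\A} p_a)^k$ and, from the definition of $\T$, $P(W_k=w)\leq\exp(-k(h(p)-\epsilon))$ for $w\in\T$. Together they give $\limsup_{k\to\infty} k^{-1}\log\max_{w\in\T}P(W_k=w)\leq\min(\log\max_{a\in\A} p_a,\,-h(p)+\epsilon)$. For the matching lower bound, exhibit a word realizing the active bound. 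When $\log\max_a p_a\leq -h(p)+\epsilon$ the constant word $(a^*,\ldots,a^*)$ with $a^*=\arg\max_a p_a$ is itself typical and has probability $(\max_a p_a)^k$. In the complementary case, apply the intermediate value theorem to $l\mapsto h(l)+D(l||p)$ along the tilted path $l_a(\lambda)\propto p_a^{\lambda}$: this function equals $h(p)$ at $\lambda=1$ and tends to $-\log\max_a p_a<h(p)-\epsilon$ as $\lambda\to\infty$, so some $l^\circ$ on the path satisfies $h(l^\circ)+D(l^\circ||p)=h(p)-\epsilon$; approximating $l^\circ$ by the empirical type of a suitably chosen length-$k$ word produces such a word inside $\T$ with probability $\exp(-k(h(p)-\epsilon)+o(k))$. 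The two cases combine to give $\p1_\W=\min(-h(p)+\epsilon,\log\max_{a\in\A}p_a)$.

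The main technical obstacle is this last construction: rational approximation of $l^\circ$ by empirical types of length $k$ perturbs $h(l)+D(l||p)$ by $O(1/k)$, and a short continuity argument is needed to confirm the approximating word actually lies in $\T$ for all sufficiently large $k$. The remaining estimates are routine consequences of Stirling's formula and the weak AEP.
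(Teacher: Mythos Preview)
Your argument is correct, but it takes a different route from the paper's own proof. You compute $\p1_{\U}$ and $\p1_{\W}$ directly: for $\U_k$ you re-derive the growth rate of $|\T|$ (which is exactly the content of the paper's Lemma~\ref{lem:T}, so you could simply cite it), and for $\W_k$ you explicitly identify the most likely typical word via a tilted-family/intermediate-value construction. The paper instead observes that once Assumption~1 has been established (Lemma~\ref{lem:ass1}), Assumption~2 follows generically: for any process one has the sandwich
\[
\alpha R\!\left(\tfrac{1}{1+\alpha}\right)-(1+\alpha)\log m
\;\le\;\tfrac1k\log P(G=1)\;\le\;\alpha R\!\left(\tfrac{1}{1+\alpha}\right)
\]
for $\alpha>-1$, and letting $\alpha\downarrow-1$ both sides collapse to $\lim_{\alpha\downarrow-1}\Lambda(\alpha)$, which is then read off from the formulae of Lemma~\ref{lem:ass1}. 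The paper in fact mentions that a direct ``letter substitution'' argument---essentially what you do---is also possible. Your approach is more concrete and self-contained (it does not lean on the heavier Lemma~\ref{lem:ass1}), and it makes the extremal typical word visible; the paper's approach is shorter and works uniformly for any process once the sCGF is known. The one place to tidy in your write-up is the rational-type approximation at the boundary $h(l)+D(l\|p)=h(p)-\epsilon$: rather than a vague continuity remark, it is cleaner to round as in the proof of Lemma~\ref{lem:T} (put the surplus mass on a least likely letter so the perturbation pushes the type into the interior of $\Les$), or equivalently target $h(p)-\epsilon+\delta$ and let $\delta\downarrow0$ after $k\to\infty$.
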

Thus by direct evaluation of the sCGFs at $\alpha=1$,
\begin{align*}
\lim_{k\to\infty} \frac 1k \log E(G(\U_k)) &= h(\lminus) \text{ and }\\
\lim_{k\to\infty} \frac 1k \log E(G(\W_k)) &= \Lambda_{\W}(1).
\end{align*}
As the conditions of Theorem 3 \cite{Christiansen13} are satisfied
\begin{align*}
\lim_{k\to\infty} \frac 1k E(\log(G(\U_k)) &= \Lambda_{\U}'(0) =h(\lminus) \text{ and }\\
\lim_{k\to\infty} \frac 1k E(\log(G(\W_k)) &= \Lambda_{\W}'(0) =h(p),
\end{align*}
and we have the approximations
\begin{align*} 
P(G(\U_k)=n) &\approx \frac 1n \exp\left(-k\Lambda_{\U}^*(k^{-1}\log n)\right)
\text{ and } \\
P(G(\W_k)=n) &\approx \frac 1n \exp\left(-k\Lambda_{\W}^*(k^{-1}\log n)\right).
\end{align*}

\section{Example}

Consider a binary alphabet $\A=\{0,1\}$ and words $\{W_k\}$ constructed
of i.i.d. letters with $P(W_1=0)=p_0>1/2$. In this case there are
unique $\lminus$ and $\lplus$ satisfying equations \eqref{eq:lminus} and
\eqref{eq:lplus} determined by:
\begin{align*}
\lminus_0 &=  p_0-\frac{\epsilon}{\log(p_0)-\log(1-p_0)},\\
\lplus_0 &=  p_0+\frac{\epsilon}{\log(p_0)-\log(1-p_0)}.
\end{align*}
Selecting $0<\epsilon<(\log(p_0)-\log(1-p_0))\min(p_0-1/2,1-p_0)$ ensures
that the typical set is growing more slowly than $2^k$ and that
$1/2<\lminus_0<p_0<\lplus_0<1$.

With $\lWstar(\alpha)$ defined in equation \eqref{eq:lWstar},
from equations \eqref{eq:guess} and \eqref{def:p1}
we have that
\begin{align*}
\Lambda_{W}(\alpha)
	&=
	\begin{cases}
	\log(p_0) & \text{ if } \alpha<-1,\\
	\displaystyle 
	\alpha h(\lWstar(\alpha))-D(\lWstar(\alpha)||p),
		& \text{ if } \alpha\geq-1,\\
	\end{cases}
	\\
	&=
	\begin{cases}
	\log(p_0) & \text{ if } \alpha<-1,\\
	\displaystyle 
	(1+\alpha) \log\left(p_0^\frac{1}{1+\alpha}+(1-p_0)^{\frac{1}{1+\alpha}}\right)
		& \text{ if } \alpha\geq-1,
	\end{cases}
\end{align*}
From Lemmas \ref{lem:ass1} and \ref{lem:ass2} we obtain
\begin{align*}
\Lambda_{\U}(\alpha)
	=
	\begin{cases}
	-h(\lminus) & \text{ if } \alpha<-1,\\
	\alpha h(\lminus) & \text{ if } \alpha\geq-1,
	\end{cases}
\end{align*}
and
\begin{align*}
\Lambda_{\W}(\alpha)=\alpha h(\lstar(\alpha))-D(\lstar(\alpha)||p),
\end{align*}
where $\lstar(\alpha)$ is deinfed in equation \eqref{eq:lstar}
and $\norm(\alpha)$ defined in equation \eqref{eq:norm}.

With $\TURN$ defined in equation \eqref{eq:turn}, we
have $\TURN_W=0$, $\TURN_{\U}=h(\lminus)$ and $\TURN_{\W}=h(\lplus)$
so that, as $h(\lminus)>h(\lplus)$, the ordering of the growth rates
with word length of the set of most likely words from smallest to
largest is: unconditioned source, conditioned source and uniform
approximation.

From these sCGF equations, we can determine the average growth rates
and estimates on the Guesswork distribution. In particular, we
have that
\begin{align*}
\lim_{k\to\infty} \frac 1k E(\log(G(W_k))) &= \Lambda'_{W}(0)= h(p),\\
\lim_{k\to\infty} \frac 1k E(\log(G(\W_k))) &= \Lambda'_{\W}(0)= h(p),\\
\lim_{k\to\infty} \frac 1k E(\log(G(\U_k))) &= \Lambda'_{\U}(0)= h(\lminus).
\end{align*}
As $h((x,1-x))$ is monotonically decreasing for $x>1/2$ and
$1/2<\lminus_0<p_0$, the expectation of the logarithm
of the Guesswork is growing faster for the uniform approximation
than for either the unconditioned or conditioned word source. The
growth rate of the expected Guesswork reveals more features. In
particular, with $A=\norm(1)-(h(p)+\epsilon)$,
\begin{align*}
\lim_{k\to\infty} \frac 1k \log E(G(W_k))  
	&= 2\log(p_0^{\frac{1}{2}} + (1-p_0)^{\frac{1}{2}}),\\
\lim_{k\to\infty} \frac 1k \log E(G(\W_k))  &=
\begin{cases}
2\log(p_0^{\frac{1}{2}} + (1-p_0)^{\frac{1}{2}}),
	A\leq0\\
	h(\lminus)-D(\lminus||p),
		A>0\\
\end{cases}\\
\lim_{k\to\infty} \frac{1}{k} \log E(G(\U_k)) &= h(\lminus).
\end{align*}
For the growth rate of the expected Guesswork, from these it can
be shown that there is no strict order between the unconditioned
and uniform source, but there is a strict ordering between the the
uniform approximation and the true conditioned distribution, with
the former being strictly larger.

With $\epsilon=1/10$ and for a range of $p_0$, these formulae are illustrated
in Figure \ref{fig:fest}. The top line plots 
\begin{align*}
&\lim_{k\to\infty} \frac 1k E(\log(G(\U_k))-\log(G(W_k))) \\
&=\lim_{k\to\infty} \frac 1k E(\log(G(\U_k))-\log(G(\W_k)))
=h(\lminus)-h(p),
\end{align*}
showing that the expected growth rate in the logarithm of the
Guesswork is always higher for the uniform approximation than both
the conditioned and unconditioned sources.  The second highest line
plots the difference in growth rates of the expected Guesswork of
the uniform approximation and the true conditioned source
\begin{align*}
&\lim_{k\to\infty} \frac 1k \log
\frac{E(G(\U_k))}{E(G(\W_k))}\\
&=\begin{cases}
h(\lminus)-2\log(p_0^{\frac{1}{2}} + (1-p_0)^{\frac{1}{2}})
	&\text{ if } \norm(1)\leq h(p)+\epsilon\\
	D(\lminus||p)  
		&\text{ if } \norm(1)>h(p)+\epsilon.
\end{cases}
\end{align*}
That this difference is always positive, which can be established
readily analytically, shows that the expected Guesswork of the true
conditioned source is growing at a slower exponential rate than the
uniform approximation. The second line and the lowest
line,
the growth rates of the uniform and unconditioned expected Guesswork 
\begin{align*}
\lim_{k\to\infty} \frac 1k \log\frac{E(G(\U_k))}{E(G(W_k))}=
h(\lminus)-2\log(p_0^{\frac{1}{2}} + (1-p_0)^{\frac{1}{2}}),
\end{align*}
initially agree. It can, depending on $p_0$ and $\epsilon$, be either
positive or negative. It is negative if the typical set is particularly
small in comparison to the number of unconditioned words.

\begin{figure}
\includegraphics[scale=0.47]{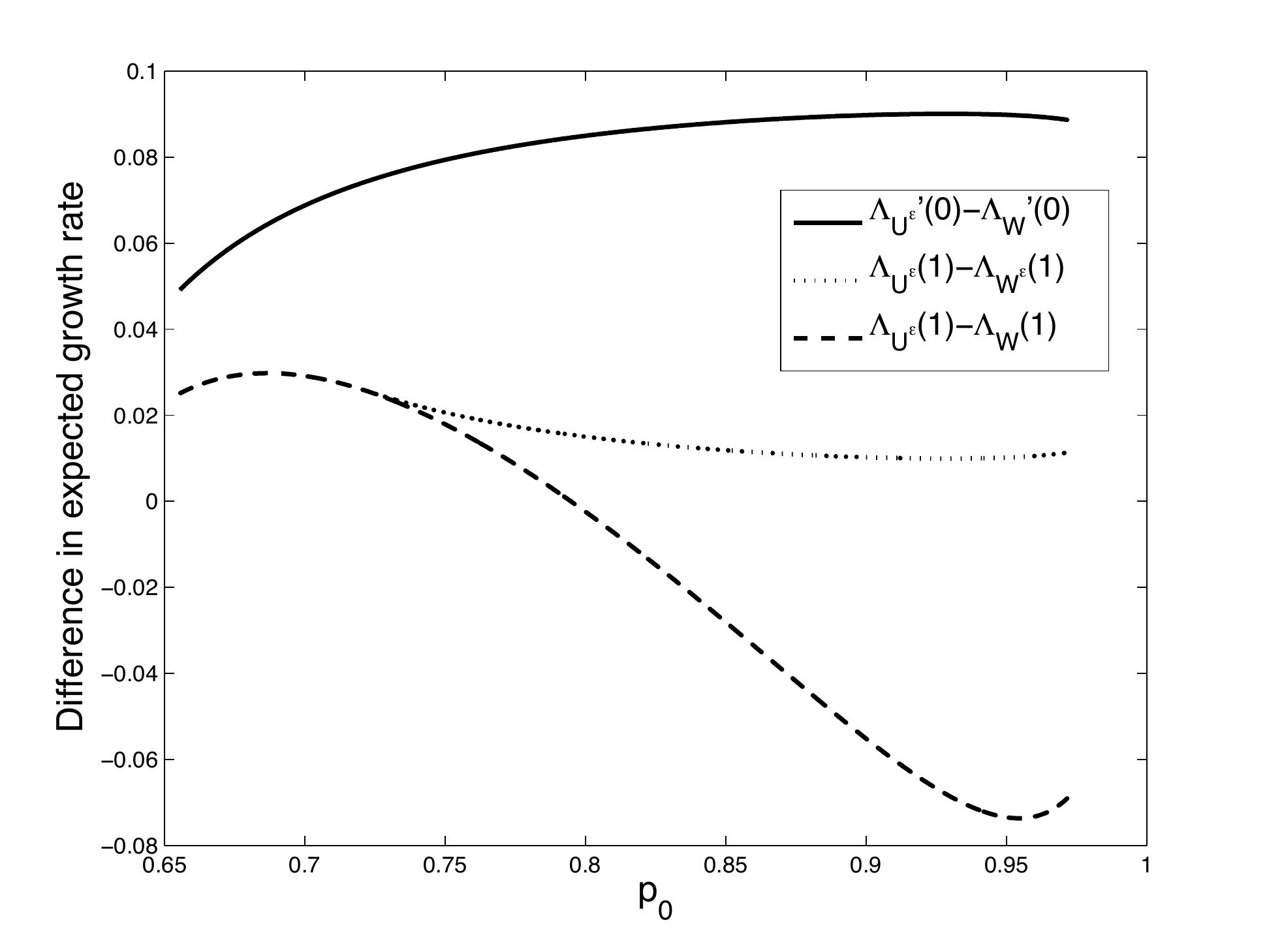}
\caption{Bernoulli$(p_0,1-p_0)$ source. Difference in 
exponential growth rates of Guesswork
between uniform approximation, unconditioned
and conditioned distribution with $\epsilon=0.1$. Top curve is
the difference in expected logarithms
between the uniform approximation and both the conditioned and
unconditioned word sources. Bottom curve is the log-ratio of the
expected Guesswork of the uniform and unconditioned word sources,
with the latter harder to guess for large $p_0$. Middle curve is the log-ratio 
of the uniform and conditioned word sources, which initially follows
the lower line, before separating and staying positive, showing
that the conditioned source is always easier to guess than the
typically used uniform approximation.
}
\label{fig:fest}
\end{figure}

For $p_0=8/10$, the typical set is growing sufficiently slowly that
a word selected from the uniform approximation is easier to guess
than for unconditioned source. For this value, we illustrate the
difference in Guesswork distributions between the unconditioned
$\{W_k\}$, conditioned $\{\W_k\}$ and uniform $\{\U_k\}$ word
sources. If we used the approximation in \eqref{eq:wknapprox}
directly, the graph would not be informative as the range of the
unconditioned source is growing exponentially faster than the other
two. Instead Figure \ref{fig:fest2} plots $-x-\Lambda^*(x)$ for
each of the three processes. That is, using equation \eqref{eq:wknapprox}
and its equivalents for the other two processes, it plots
\begin{align*}
\frac1k \log G(w), \text{ where } G(w)\in\{1,\ldots,2^k\},
\end{align*}
against the large deviation approximations to
\begin{align*}
\frac1k \log P(W_k=w), 
\frac1k \log P(\W_k=w)
\text{ and }
\frac1k \log P(\U_k=w),
\end{align*}
as the resulting plot is unchanging in $k$. The source of the
discrepancy in expected Guesswork is apparent, with the unconditioned
source having substantially more words to cover (due to the log
x-scale). Both it and the true conditioned sources having higher
probability words that skew their Guesswork. The first plateau for
the conditioned and uniform distributions correspond to those words
with approximately maximum highest probability; that is, the length
of this plateau is $\TURN_{\W}$ or $\TURN_{\U}$, defined in equation
\eqref{eq:turn}, so that, for example, approximately $\exp(k\TURN_{\W})$
words have probability of approximately $\exp(k\p1_{\W})$.

\begin{figure}
\includegraphics[scale=0.47]{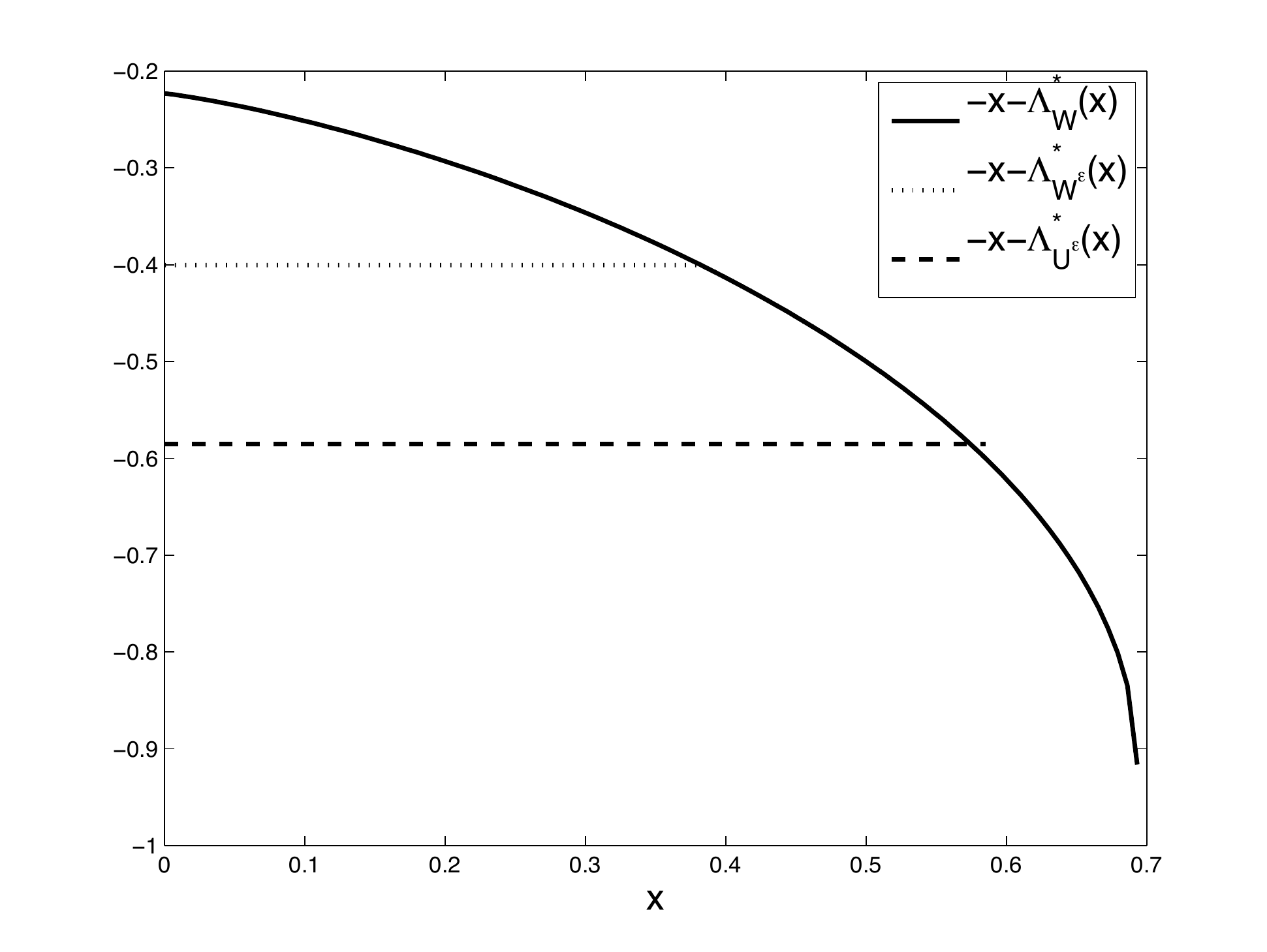}
\caption{Bernoulli($8/10,2/10$) source, $\epsilon=0.1$. 
Guesswork distribution approximations. For large $k$, $x$-axis is
$x=1/k \log G(w)$ for $G(w)\in\{1,\ldots,2^k\}$ and the $y$-axis is
the large deviation approximation $1/k \log P(X=w) \approx
-x-\Lambda_X^*(x)$ for $X=W_k,\W_k$ and $X=\U_k$.
}
\label{fig:fest2}
\end{figure}

\section{Conclusion}

By establishing that the expected Guesswork of a source conditioned
on the typical set is growing with a smaller exponent than its usual
uniform approximation, we have demonstrated that appealing to the
AEP for the latter is erroneous in cryptanalysis and instead provide
a correct methodology for identifying the Guesswork growth rate.

\appendix

The proportion of the letter $a\in\A$ in a word $w=(w_1,\ldots,w_k)\in\A^k$
is given by
\begin{align*}
\Na:=\frac{|\{1 \le i\le k:w_{i}=a\}|}{k}.
\end{align*}
The number of words in a type $l=(l_0,\ldots,l_{m-1})$, where $l_a\geq0$ for all
$a\in\A$ and $\sum_{a\in\A}l_a=1$, is given by  
\begin{align*}
&\SIZE:=|\{w \in \A^k \text{ such that } \Na=l_a \; \forall a \in \A\}|.
\end{align*}
The set of all types, those just in the typical set and smooth
approximations to those in the typical set are denoted
\begin{align*}
\La&:=\{l:\exists w \in \A^k \text{ such that } \Na=l_a \; \forall a \in \A\},\\
\Le&:=\{l:\exists w \in T_{\epsilon, k}
	\text{ such that } \Na=l_a \; \forall a \in \A\},\\
\Les&:=\left\{l:\sum_a l_a \log p_a
	\in[-h(p)-\epsilon,-h(p)+\epsilon]\right\},\\
\end{align*}
where it can readily seen that $\Le\subset\Les$ for all $k$.

For $\{\U_k\}$ we need the following Lemma.
\begin{lemma}\label{lem:T}
The exponential growth rate of the size of the typical set is
\begin{align*}
\lim_{k \rightarrow \infty}\frac 1k \log |\T|=\begin{cases}
&\log m \mbox{ if } \log m\le h(p)+\epsilon\\
&h(\lminus) \mbox{ otherwise}.\end{cases}
\end{align*}
where $\lminus$ is defined in equation \eqref{eq:lminus}.
\end{lemma}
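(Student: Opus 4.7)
The plan is to reduce the computation of $|\T|$ to a finite-dimensional optimization over types via the method of types, then solve that optimization by a Lagrangian argument. Since the letters of $W_k$ are i.i.d., $\P(W_k=w)$ depends on $w$ only through its empirical type $l(w)$; using the identity $-\sum_a l_a\log p_a=h(l)+D(l\|p)$,
\begin{equation*}
\P(W_k=w)=\exp\bigl(-k[h(l(w))+D(l(w)\|p)]\bigr).
\end{equation*}
Consequently $w\in\T$ iff $l(w)\in\Le$, so $|\T|=\sum_{l\in\Le}\SIZE$. Applying the standard type bounds $(k+1)^{-m}e^{kh(l)}\le\SIZE\le e^{kh(l)}$ together with the polynomial estimate $|\La|\le(k+1)^m$, the density of rational types in the simplex, and the continuity of $h+D$, I would obtain
\begin{equation*}
\lim_{k\to\infty}\frac{1}{k}\log|\T|=\sup\{h(l):l\in\Les\}.
\end{equation*}

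The remaining task is the constrained maximization on the right. The feasible set $\Les$ is the intersection of the simplex with the slab $\{l:-\sum_a l_a\log p_a\in[h(p)-\epsilon,h(p)+\epsilon]\}$; the constraint is linear in $l$ and $h$ is strictly concave, so the unconstrained maximizer of $h$ is the uniform distribution $u$ with value $\log m$. In the degenerate regime of the first case of the lemma, the upper constraint is slack at $u$ and the supremum trivially equals $\log m$, also matching the crude bound $|\T|\le m^k$. In the main case, the upper boundary is binding; KKT stationarity then confines the optimum to the one-parameter exponential family $l^\lambda_a\propto p_a^{\lambda}$. A direct computation yields $\frac{d}{d\lambda}\bigl(-\sum_a l^\lambda_a\log p_a\bigr)=-\operatorname{Var}_{l^\lambda}(\log p)$ and $\frac{d}{d\lambda}h(l^\lambda)=-\lambda\operatorname{Var}_{l^\lambda}(\log p)$, so as $\lambda$ decreases from $1$ (where $l=p$ and both quantities equal $h(p)$) towards $0$ (where $l=u$), both strictly increase, and the target value $h(p)+\epsilon$ is reached at a unique $\lambda^*\in(0,1)$ whose image is precisely the $\lminus$ of \eqref{eq:lminus}. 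The supremum is then $h(\lminus)$.

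The principal obstacle is excluding other candidate maximizers. Strict concavity of $h$ paired with linearity of the constraint ensures all interior stationary points lie in the family $l^\lambda_a\propto p_a^\lambda$; on the same family the opposite boundary $-\sum_a l_a\log p_a=h(p)-\epsilon$ corresponds to $\lambda>1$, where the derivative formula $dh/d\lambda=-\lambda\operatorname{Var}_{l^\lambda}(\log p)<0$ gives $h(\lplus)<h(p)<h(\lminus)$, so $\lplus$ is strictly suboptimal. What remains is routine bookkeeping of the method-of-types error terms and a sanity check at the transition between the two cases.
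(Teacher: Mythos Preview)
Your approach is essentially the paper's: both reduce $|\T|$ via the method of types to $\sup_{l\in\Les}h(l)$ and then identify the optimum. Two minor differences are worth noting. First, for the lower bound the paper builds an explicit sequence $l^{(k)}\in\Le$ by rounding the coordinates of the target $l^*$ down and placing the deficit on a most-likely letter, which forces $-\sum_a l^{(k)}_a\log p_a\le -\sum_a l^*_a\log p_a=h(p)+\epsilon$ and hence $l^{(k)}\in\Le$; your appeal to ``density of rational types plus continuity'' glosses over the fact that $\lminus$ sits on the \emph{boundary} of the slab $\Les$, so a generic rational approximant can land just outside $\Le$ --- you should either give such a one-sided construction or explicitly approximate from the interior of $\Les$. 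Second, the paper simply states the optimizer, whereas your KKT argument placing the optimum on the exponential family $l_a^\lambda\propto p_a^\lambda$ together with the derivative identities $\tfrac{d}{d\lambda}\bigl(-\sum_a l_a^\lambda\log p_a\bigr)=-\operatorname{Var}_{l^\lambda}(\log p)$ and $\tfrac{d}{d\lambda}h(l^\lambda)=-\lambda\operatorname{Var}_{l^\lambda}(\log p)$ is a correct and useful addition that the paper omits.
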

\begin{proof}
For fixed $k$, by the union bound 
\begin{align*}
\max_{l \in \Le}\frac{k!}{\prod_{a \in \A}(kl_a)!}\le |\T|
&\le (k+1)^m\max_{l \in \Le}\frac{k!}{\prod_{a \in \A}(kl_a)!}.
\end{align*}
For the logarithmic limit, these two bounds coincide so
consider the concave optimization problem
\begin{align*}
\max_{l \in \Le}\frac{k!}{\prod_{a \in \A}(kl_a)!}.
\end{align*}
We can upper bound this optimization by replacing $\Le$ with
the smoother version, its superset $\Les$. Using Stirling's
bound we have that
\begin{align*}
&\limsup_{k\to\infty} \frac 1k \log
	\sup_{l \in \Les}\frac{k!}{\prod_{a \in \A}(kl_a)!}\\
&\qquad\leq \sup_{l \in \Les} h(l) 
=\begin{cases}
	\log(m) & \text{ if } h(p)+\epsilon\geq\log(m)\\
	h(\lminus) & \text{ if } h(p)+\epsilon<\log(m).
\end{cases}
\end{align*}

For the lower bound, we need to construct a sequence $\{\lmax\}$
such that $\lmax\in\Le$ for all $k$ sufficiently large and
$h(\lmax)$ converges to either $\log(m)$ or $h(\lminus)$,
as appropriate. Let $l^*=(1/m,\ldots,1/m)$ or $\lminus$
respectively, letting $c\in\arg\max p_a$ and define
\begin{align*}
\lmax_a = 
	\begin{cases}
	k^{-1} \lfloor k\lstar_a \rfloor 
	+ 1-\displaystyle \sum_{b\in\A}\frac1k \lfloor k\lstar_b \rfloor&
	\text{ if } a= c, \\
	k^{-1} \lfloor k\lstar_a \rfloor & \text{ if } a\neq c.
	\end{cases}
\end{align*}
Then $\lmax\in\Le$ for all $k>-m\log(p_c)/(2\epsilon)$ and
$h(\lmax)\to h(\lstar)$, as required.

\end{proof}

\begin{proof}
Proof of Lemma \ref{lem:ass1}. Considering $\{\U_k\}$ first, 
\begin{align*}
\alpha R_{\U}\left(\frac{1}{1+\alpha}\right)
	=\alpha\lim_{k \rightarrow \infty}\frac 1k \log |\T|
	=\alpha h(\lminus),
\end{align*}
by Lemma \ref{lem:T}.
To evaluate $\Lambda_{\U}(\alpha)$,
as for any $n \in \N$ and $\alpha>0$
\begin{align*}
\sum_{i=1}^n i^\alpha \ge \int_0^n x^\alpha dx,
\end{align*}
again using Lemma \ref{lem:T} we have
\begin{align*}
\alpha h(\lminus) 
&=
\lim_{k \rightarrow \infty}\frac 1k\log \frac{1}{1+\alpha}|\T|^\alpha\\
&\le \lim_{k \rightarrow \infty}\frac 1k\log E(e^{\alpha \log G(\U_k)})\\
&=\lim_{k \rightarrow \infty}\frac 1k\log \frac{1}{|\T|} \sum_{i =1}^{|\T|}i^\alpha \\
&\le \lim_{k \rightarrow \infty}\frac 1k \log |\T|^\alpha = \alpha h(\lminus),
\end{align*}
where we have used Lemma \ref{lem:T}. The reverse of these bounds
holds for $\alpha\in (-1,0]$, giving the result.

We break the argument for $\{\W_k\}$ into three steps. Step 1  is
to show the equivalence of the existence of $\Lambda_{\W}(\alpha)$
and $\alpha R_{\W}(1/(1+\alpha))$ for $\alpha>-1$ with the existence
of the following limit
\begin{align}
&\lim_{k \rightarrow \infty}\frac 1k\log \max_{ l \in \Le}\left\{\SIZE^{1+\alpha}\prod_{a \in \A}p_a^{kl_a}\right\}\label{eq:max}.
\end{align}
Step 2 then establishes this limit and identifies it. Step 3 shows
that $\Lambda_{\W}'(\alpha)$ is continuous for $\alpha>-1$. To
achieve steps 1 and 2, we adopt and adapt the method of types
argument employed in the elongated web-version of \cite{Malone042}.

\emph{\underline{Step 1}}  
Two changes from the bounds of \cite{Malone042} Lemma 5.5 are necessary: the
consideration of non-i.i.d. sources by restriction to $\T$; and the
extension of the $\alpha$ range to include $\alpha\in(-1,0]$ from
that for $\alpha\geq0$ given in that document.
Adjusted for conditioning on the typical set we get
\begin{align}
&\frac{1}{1+\alpha}\max_{ l \in \Le}\left\{\SIZE^{1+\alpha}\frac{\prod_{a \in \A}p_a^{kl_a}}{\sum_{w \in \T}P(W_k=w)}\right\}\nonumber\\
&\le E(e^{\alpha \log G(\W_k)}) \le 
\label{eq:upper}
\\
&(k+1)^{m(1+\alpha)}\max_{l\in \Le}\left\{\SIZE^{1+\alpha}\frac{\prod_{a \in \A}p_a^{kl_a}}{\sum_{w \in \T}P(W_k=w)}\right\}.\nonumber
\end{align}
The necessary modification of these inequalities for $\alpha \in (-1, 0]$
gives
\begin{align}
&\max_{ l \in \Le}\left\{\SIZE^{1+\alpha}\frac{\prod_{a \in \A}p_a^{kl_a}}{\sum_{w \in \T}P(W_k=w)}\right\}\nonumber\\
&\le E(e^{\alpha \log G(\W_k)}) \le \label{eq:lower} \\
& \frac{(k+1)^m}{1+\alpha}\max_{l\in \Le}\left\{\SIZE^{1+\alpha}\frac{\prod_{a \in \A}p_a^{kl_a}}{\sum_{w \in \T}P(W_k=w)}\right\}.\nonumber
\end{align}
To show the lower bound holds if $\alpha \in (-1, 0]$ let
\begin{align*}
&\lstar\in \arg\max_{l\in \Le}\left\{\SIZE^{1+\alpha}\frac{\prod_{a \in \A}p_a^{kl_a}}{\sum_{w \in \T}P(W_k=w)}\right\}.
\end{align*}
Taking $\liminf_{k \rightarrow \infty} k^{-1} \log $ and $\limsup_{k
\rightarrow \infty} k^{-1} \log $ of equations \eqref{eq:upper} and
\eqref{eq:lower} establishes that if the limit \eqref{eq:max} exists,
$\Lambda_{\W}(\alpha)$ exists and equals it. Similar inequalities
provide the same result for $\alpha R_{\W}(1/(1+\alpha))$.

\emph{\underline{Step 2}}
The problem has been reduced to establishing the existence of
\begin{align*}
\lim_{k\rightarrow \infty}\frac 1k \log 
\max_{ l \in \Le}\left\{\SIZE^{1+\alpha}\prod_{a \in \A}p_a^{kl_a}\right\}
\end{align*}
and identifying it. The method of proof is similar to that employed
at the start of Lemma \ref{lem:ass1} for $\{\U_k\}$: we provide an upper bound for the limsup
and then establish a corresponding lower bound.

If $\lk\to l$ with $\lk\in\La$, then using Stirling's
bounds we have that
\begin{align*}
\lim_{k\to\infty}\frac1k\log\SIZEmax = h(l). 
\end{align*}
This convergence occurs uniformly in $l$ and so, as $\Le\subset\Les$
for all $k$,
\begin{align}
\limsup_{k\rightarrow \infty}\frac 1k \log 
\max_{ l \in \Le}\left\{\SIZE^{1+\alpha}\prod_{a \in \A}p_a^{kl_a}\right\}
\nonumber\\
\leq\sup_{l\in\Les} \left((1+\alpha)h(l)+\sum_a l_a\log p_a\right)
\nonumber\\
=\sup_{l\in\Les} \left(\alpha h(l)-D(l||p)\right)\label{eq:optim}.
\end{align}
This is a concave optimization problem in $l$ with convex constraints.
Not requiring $l\in\Les$, the unconstrained optimizer over all $l$ is 
attained at $\lWstar(\alpha)$ defined in equation \eqref{eq:lWstar},
which determines $\norm(\alpha)$ in equation \eqref{eq:norm}.
Thus the optimizer of the constrained problem \eqref{eq:optim} can
be identified as that given in equation \eqref{eq:lstar}.
Thus we have that
\begin{align*}
\limsup_{k\rightarrow \infty}\frac 1k \log 
\max_{l \in \Le}\left\{\SIZE^{1+\alpha}\prod_{a \in \A}p_a^{kl_a}\right\}\\
\leq \alpha h(\lstar(\alpha))+D(\lstar(\alpha)||p),
\end{align*}
where $\lstar(\alpha)$ is defined in equation \eqref{eq:lstar}.

We complete the proof by generating a matching lower bound. To do
so, for given $\lstar(\alpha)$ we need only create a sequence such
that $\lk\to\lstar(\alpha)$ and $\lk\in\Le$ for all $k$. If
$\lstar(\alpha)=\lminus$, then the sequence used in the proof of
Lemma \ref{lem:T} suffices. For $\lstar(\alpha)=\lplus$, we use the
same sequence but with floors in lieu of ceilings and the surplus
probability distributed to a least likely letter instead of a most
likely letter. For $\lstar(\alpha)=\lWstar(\alpha)$, either of these
sequences can be used.

\emph{\underline{Step 3}}
As $\Lambda_{\W}(\alpha)=\alpha h(\lstar(\alpha))-D(\lstar(\alpha)||p)$,
with $\lstar(\alpha)$ defined in equation \eqref{eq:lstar},
\begin{align*}
\frac{d}{d\alpha} \Lambda_{\W}(\alpha)
	= h(\lstar(\alpha)) 
	+ \Lambda_{\W}(\alpha) \frac{d}{d\alpha}\lstar(\alpha).
\end{align*}
Thus to establish continuity it suffices to establish continuity
of $\lstar(\alpha)$ and its derivative, which can be done
readily by calculus.
\end{proof}

\begin{proof}

Proof of Lemma \ref{lem:ass2}. This can be established directly by
a letter substitution argument, however, more generically it can be seen
as being a consequence of the existence of specific min-entropy as
a result of Assumption 1 via the following inequalities
\begin{align}
&\alpha R\left(\frac{1}{1+\alpha}\right)-(1+\alpha)\log m\nonumber\\
&\le \liminf_{k \rightarrow \infty} \frac{1+\alpha}{k} \log \frac{m^kP(G(W_k)=1)^{(1/(1+\alpha))}}{m^k}\label{line}\\
&\le\limsup_{k \rightarrow \infty}\frac 1k \log P(G(W_k)=1)\nonumber
\end{align}
\begin{align}
&= (1+\alpha)\limsup_{k \rightarrow \infty}\frac 1k \log P(G(W_k)=1)^{(1/(1+\alpha))}\nonumber\\
&\le (1+\alpha)\limsup_{k \rightarrow \infty}\frac 1k \log (P(G(W_k)=1)^{(1/(1+\alpha))}\nonumber\\
&+\sum_{i=2}^{m^k}P(G(W_k)=i)^{(1/(1+\alpha))})=\alpha R\left(\frac{1}{1+\alpha}\right).\nonumber
\end{align}
Equation \eqref{line} holds as $P(G(W_k)=1) \ge P(G(W_k)=i)$ for all
$i \in \{1, \ldots, m^k\}$. The veracity of the lemma follows as
$\alpha R\left((1+\alpha)^{-1}\right)$ exists and is continuous for
all $\alpha>-1$ by Assumption 1 and $(1+\alpha)\log m$ tends to $0$
as $\alpha \downarrow -1$.

\end{proof}

\section*{Acknowledgment}

M.C. and K.D. supported by the Science Foundation Ireland 
Grant No. 11/PI/1177 and the Irish Higher Educational Authority
(HEA) PRTLI Network Mathematics Grant.  F.d.P.C.  and M.M. 
sponsored by the Department of Defense under Air Force Contract
FA8721-05-C-0002. Opinions, interpretations, recommendations, and
conclusions are those of the authors and are not necessarily endorsed
by the United States Government. Specifically, this work was supported
by Information Systems of ASD(R\&E).



\end{document}